\newcommand{\INITIALLY}{\REQUIRE{}}
\newcommand{\ROUND}{\ENSURE{}}
\newcommand{\nop}[1]{}
\newtheorem{thm}{Theorem}
\newtheorem{lem}[thm]{Lemma}
\DeclareMathOperator{\diam}{diam}
\DeclareMathOperator*{\argmax}{arg\,max}
\newcommand{\N}{\mathcal{N}}
\newcommand{\IR}{\mathbb{R}}
\renewcommand{\leq}{\leqslant}
\renewcommand{\ge}{\geqslant}
\renewcommand{\geq}{\geqslant}
\DeclareMathOperator\Rcv{Rcv}
\title{Fast Multidimensional Asymptotic and Approximate Consensus}
\author{Matthias F\"ugger\\
  CNRS, LSV, ENS Paris-Saclay, Inria\\
  {\tt mfuegger@lsv.fr}\\
  \and
  Thomas Nowak\\
  Universit\'e Paris-Sud\\
  {\tt thomas.nowak@lri.fr}}
\title{Fast Multidimensional Asymptotic and Approximate Consensus}
\date{}
\begin{document}

\maketitle

\begin{abstract}
We study the problems of asymptotic and approximate consensus in which 
agents have to get their values arbitrarily close to each others'
inside the convex hull of initial values, either without or with an explicit
decision by the agents.
In particular, we are concerned with the case of multidimensional data, i.e.,
the agents' values are $d$-dimensional vectors.
We introduce two new algorithms for dynamic networks,
subsuming classical failure models like asynchronous message passing systems with Byzantine agents.
The algorithms are the first to have a contraction rate and time complexity independent of the dimension~$d$.
In particular, we improve the time complexity
from the previously fastest approximate consensus algorithm in asynchronous
message passing systems with Byzantine faults 
by Mendes et al. [Distrib.\ Comput.~28]
from $\Omega\!\left( d \log\frac{d\Delta}{\varepsilon} \right)$
to $O\!\left( \log\frac{\Delta}{\varepsilon} \right)$,
where~$\Delta$ is the initial and~$\varepsilon$ is the terminal diameter
of the set of vectors of correct agents.
\end{abstract}

\section{Introduction}

The problem of one-dimensional asymptotic consensus requires a system of agents, starting from potentially different
  initial real values, to repeatedly set their local output variables such that all outputs
  converge to a common value within the convex hull of the inputs.
This problem has been studied in distributed control theory both from a theoretical perspective~\cite{Cha11,Mor05,CMA08b,AB06} and
  in the context of robot gathering on a line~\cite{BPGS10} and clock synchronization~\cite{OM04,LR06}.
Extensions of the problem to multidimensional values naturally arise in the context of
  robot gathering on a plane or three-dimensional space~\cite{CFPS03},
  as subroutines in formation forming~\cite{Cha11}, and distributed
  optimization~\cite{boyd2004convex}, among others.

The related problem of approximate consensus, also called approximate agreement, requires the agents to
  eventually decide, i.e., not change their output variables.
Additionally all output variables must be within a predefined $\varepsilon > 0$ distance of each other and lie within the
  convex hull of the inputs.
There is a large body of work on approximate consensus in distributed computing devoted to solvability and optimality
  of time complexity~\cite{DLPSW86,Fek90} and applications in clock synchronization; see e.g.~\cite{WL88,schneider1987understanding}.

Both problems were studied under different assumptions on the underlying communication between agents and
  their computational strength, including fully connected asynchronous message passing with Byzantine agents~\cite{WL88,DLPSW86}
  and communication in rounds by message passing in dynamic communication networks~\cite{Mor05,Cha11}.
In~\cite{CBFN15,CBFN16} Charron-Bost et al.\ analyzed solvability of asymptotic consensus and approximate consensus in
  dynamic networks with round-wise message passing defined by {\em network models}: a network model is a set of directed communication
  graphs, each of which specifies successful reception of broadcast messages; see Section~\ref{sec:dynmodel} for a formal definition.
Solving asymptotic consensus in such a model requires to fulfill the specification of asymptotic consensus in {\em any\/} sequence
  of communication graphs from the model.
Charron-Bost et al.\ showed that in these highly dynamic networks, asymptotic consensus and approximate consensus are solvable in
  a network model if and only if each of its graphs contains a spanning rooted tree.
An interesting class of network models are those that contain only {\em non-split\/} communication graphs, i.e.,
  communication graphs where each pair of nodes has a common incoming neighbor.
Several classical fault-model were shown to be instances of non-split models~\cite{CBFN15}, among them asynchronous message passing
  systems with omissions.

Recently the multidimensional version of approximate consensus received
attention.  
Mendes et al. \cite{MHVG15} were the first to present algorithms that solve
approximate consensus in Byzantine message passing systems
for $d$-dimensional real vectors.
Their algorithms, Mendes–Herlihy and Vaidya–Garg, are based on the repeated construction of so called safe areas of received vectors
  to constraint influence of values sent by Byzantine agents, followed
  by an update step, ensuring that the new output values are in the safe area.
They showed that the diameter of output values contracts by at least $1/2$ in each
  dimension every~$d$ rounds in the Mendes–Herlihy algorithm,
  and the diameter of the output values contracts by at least $1-1/n$ every round in the Vaidya–Garg algorithm,
  where~$n$ is the number of agents.
The latter bound assumes $f=0$ Byzantine failures and slightly worsens for $f>0$.
In terms of contraction rates as introduced in~\cite{FNS18:podc} (see Section~\ref{sec:metrics} for a definition)
  of the respective non-terminating algorithms for asymptotic consensus, they thus obtain
  upper bounds of $\sqrt[d]{1/2}$ and $1-1/n$.
Note that the Mendes–Herlihy algorithm has contraction rate depending only on $d$ but
  requires an a priori common coordinate system, while the Vaidya–Garg algorithm is coordinate-free but has a contraction
  rate depending on $n$.

Charron-Bost et al.~\cite{CBFN16:centroid} analyzed convergence of the Centroid algorithm where agents repeatedly update their
  position to the centroid of the convex hull of received vectors.
The algorithm is coordinate-free and has a contraction rate of $d/(d+1)$, independent of $n$.  
Local time complexity of determining the centroid was shown to be \#P-complete~\cite{Rad07} while polynomial in $n$ for fixed $d$.

The contraction rate of the Centroid algorithm is always smaller or equal to
that of the Mendes-Herlihy algorithm, though both converge to~$1$ at the 
same speed. 
More precisely,
\[
\lim_{d\to\infty}
\frac{\left\lvert 1 - \sqrt[d]{\frac{1}{2}}\right\rvert}
{\left\rvert 1 - \frac{d}{d+1}\right\rvert}
=
\log 2
\enspace,
\]
which implies
$\left\lvert 1 - \sqrt[d]{\frac{1}{2}}\right\rvert
=
\Theta
\left(
\left\rvert 1 - \frac{d}{d+1}\right\rvert
\right)$.

\subsection{Contribution}

In this work we present two new algorithms that are coordinate-free: the MidExtremes and the ApproachExtreme algorithm, and study their behavior in dynamic networks.
Both algorithms are coordinate-free, operate in rounds, and are shown to solve asymptotic agreement in non-split network models.
Terminating variants of them are shown to solve approximate agreement in non-split network models.

As a main result we prove that their convergence rate is independent of network size $n$ and dimension $d$ of the initial values.
For MidExtremes we obtain an upper bound on the contraction rate of $\sqrt{7/8}$ and for ApproachExtreme of $\sqrt{31/32}$.

Due to the fact that classical failure models like asynchronous message passing with Byzantine agents possess corresponding
network models, our results directly yield improved algorithms for the latter failure models:
In particular, we improve the time complexity from the previously fastest approximate consensus algorithm in asynchronous
message passing systems with Byzantine faults, the Mendes–Herlihy algorithm,
from $\Omega\!\left( d \log\frac{d\Delta}{\varepsilon} \right)$
to $O\!\left( \log\frac{\Delta}{\varepsilon} \right)$,
where~$\Delta$ is the initial and~$\varepsilon$ is the terminal diameter
of the set of vectors of correct agents.
Note that our algorithms share the benefit of being coordinate-free with the Vaidya–Garg algorithm presented in the same work.

Table~\ref{tab:sum} summarizes our results and the algorithms discussed above for asymptotic and approximate consensus.
The table compares the new algorithms MidExtremes and ApproachExtreme to the Centroid, Mendes–Herlihy (MH), and Vaidya–Garg (VG) algorithms
with respect to their local time complexity per agent and round and an upper bound on their contraction rate.
A lower bound of $1/2$ on the contraction rate is due to F\"ugger et al.\ \cite{FNS18:podc}.
 
\begin{table}
\centering
\small  
{\renewcommand{\arraystretch}{1.4}
\begin{tabular}{| c | c | c | c | c | c | c |}
\hline
&  MidExtremes &  ApproachExtreme & Centroid & MH & VG\\
\hline
contraction rate & $\sqrt{\frac{7}{8}}$\textsuperscript{*} & $\sqrt{\frac{31}{32}}$\textsuperscript{*} & $\frac{d}{d+1}$\ & $\sqrt[d]{\frac{1}{2}}$ & $1-\frac{1}{n}$\\
\hline
local {\tt TIME} & $O(n^2d)$ & $O(nd)$ & \#P-hard & $O(nd)$ & $O(nd)$ \\
\hline
coordinate-free & yes & yes & yes & no & yes\\
\hline
\end{tabular}
}
\caption{Comparison of local time complexity and contraction rates.
  Entries marked with an \textsuperscript{*} are new results in this paper.
}
\label{tab:sum}
\end{table}

The Mendes-Herlihy algorithm has a smaller contraction rate than the 
MidExtremes algorithm whenever $d\leq 10$;
the Centroid algorithm whenever $d\leq 14$.
The Centroid algorithm is hence the currently fastest known algorithm for 
dimensions $3\leq d\leq 14$.
For dimensions $d=1$ and $d=2$, the componentwise MidPoint algorithm has an
optimal contraction rate of~$1/2$\ \cite{CBFN16:centroid}.
Note that the MidExtremes algorithm is equivalent to the componentwise MidPoint
  algorithm for dimension $d=1$.
For $d\geq 15$, the MidExtremes algorithm is the currently fastest known
algorithm.

We finally note that all our results hold for the class of inner product spaces
and are not restricted to
the finite-dimensional Euclidean spaces $\IR^d$, in contrast to previous
work.
For example, this includes the set of square-integrable functions on a real interval.

\section{Model and Problem}

We fix some vector space~$V$ with an inner product 
$\langle \cdot , \cdot \rangle : V\times V \to \IR$
and the norm $\lVert x\rVert = \sqrt{\langle x, x\rangle}$.
The prototypical finite-dimensional example is
$V = \IR^d$
with the usual inner product
and the Euclidean norm.
The diameter of set $A\subseteq V$ is denoted by
  $\displaystyle\diam(A) = \sup_{x,y\in A} \lVert x-y \rVert$.
For an $n$-tuple $x=(x_1,\dots,x_n)\in V^n$ of vectors in~$V$,
we write
  $\Delta(x) = \diam\big(\{x_1,\dots, x_n\}\big)$.

\subsection{Dynamic Network Model}
\label{sec:dynmodel}

We consider a distributed system of~$n$ agents that communicate in
  rounds via message passing, like in the Heard-Of model~\cite{CS09}.
In each round, each agent~$i$, broadcasts a message
  based on its local state, receives some messages, and 
  then updates its local state
  based on the received messages and its local state.
Rounds are communication closed: agents only receive
  messages sent in the same round.

In each round $t \ge 0$, messages are delivered according to the
  {\em directed communication graph\/} $G_t$ for round~$t$:
  the message broadcast by~$i$ in round~$t$ is
  received by~$j$ if and only if the directed edge $(i,j)$ is in~$G_t$.
Agents always receive their own messages, i.e., $(i,i) \in G_t$.
A {\em communication pattern\/} is an infinite sequence $G_1, G_2, \dots$ of
communication graphs.
A (deterministic) {\em algorithm\/} specifies, for each agent~$i$, the local
state space of~$i$,
  the set of initial states of~$i$, 
  the sending function for which message to broadcast, and the state transition
  function.  
For asymptotic consensus,
each agent~$i$'s local state necessarily contains a variable $y_i \in V$,
which initially holds $i$'s input value and 
is then used as its output variable.
We require the that there is an initial state with initial value~$v$ for all 
vectors $v\in V$.
A {\em configuration\/} is an $n$-tuple of local states.
It is called initial if all local states are initial.  
The {\em execution\/} of an algorithm from initial configuration $C_0$ induced
by communication pattern
  $G_1, G_2, \dots$ is the unique sequence $C_0, G_1, C_1, G_2, C_2, \dots$
  alternating between configurations and communication graphs
  where $C_{t}$ is the configuration obtained by delivering messages in round
  $t$ according to
  communication graph~$G_t$, and applying the sending and local transition
  functions to the local states
  in $C_{t-1}$ according to the algorithm.
For a fixed execution and a local variable~$z$ of the algorithm,
we denote by $z_i(t)$ its value at~$i$ at the end of round~$t$, i.e.,
in configuration~$C_t$.
In particular, $y_i(t)$ is the value of $y_i$ in $C_t$.
We write
$y(t) = \big( y_1(t), \dots, y_n(t) \big)$
for the collection of the $y_i(t)$.

A specific class of algorithms for asymptotic consensus 
are the so-called {\em convex combination}, or {\em averaging},
algorithm, which only ever update the value of~$y_i$ inside the convex hull
of~$y_j$ it received from other agents $j$ in the current round.
Many algorithms in the literature belong to this class, as do ours.

Following~\cite{CBFN15}, we study the behavior of algorithms for communication patterns
  from a {\em network model}, i.e., a non-empty set of
  communication patterns: a communication pattern is from network model $\N$ if all its
  communication graphs are in $\N$.
We will later on show that such an analysis also allows to prove new performance bounds for
  more classical fault-models like asynchronous message passing systems with Byzantine agents. 

An interesting class of network models are so called {\em non-split\/} models, i.e., those
  that contain only non-split communication graphs: a communication graph is non-split if
  every pair of nodes has a common in-neighbor.
Charron-Bost et al.\ \cite{CBFN15} showed that asymptotic and approximate consensus is
  solvable efficiently in these network models in the case of one dimensional values.
They further showed that: (i) In the weakest (i.e., largest) network model in which asymptotic
  and approximate consensus are solvable, the network model of all communication graphs that
  contain a rooted spanning tree, one can simulate non-split communication graphs.
  (ii) Classical failure models like link failures as considered in~\cite{SW89} and asynchronous message passing
  systems with crash failures have non-split interpretations.
Indeed we will make use of such a reduction from non-split network models to asynchronous message passing
  systems with Byzantine failures in Section~\ref{sec:asyncbyz}.

\subsection{Problem Formulation}

An algorithm {\em solves the asymptotic consensus problem\/} in a network model $\N$ 
  if the following holds for every execution with a communication pattern from $\N$: 
\begin{itemize}
  \item{\em Convergence.\/} Each sequence $\big(y_i (t)\big)_{t\geq 0}$ converges.
    
  \item{\em Agreement.\/} If $y_i(t)$ and $y_j(t)$ converge, then they have a common limit.
    
  \item{\em Validity.\/} If $y_i(t)$ converges, then its limit is in the convex hull of the
    initial values $y_1(0), \dots, y_n(0)$.
	
\end{itemize}

For the deciding version, the {\em approximate consensus\/} problem (see, e.g., \cite{Lyn96}),
  we augment the local state of~$i$ with a variable~$d_i$ initialized to~$\bot$.
Agent~$i$ is allowed to set~$d_i$ to some value $v\neq \bot$ only once, in which case
  we say that~$i$ {\em decides}~$v$.
In addition to the initial values~$y_i(0)$, agents initially receive the error
  tolerance~$\varepsilon$ and an upper bound~$\Delta$ on the maximum distance of
  initial values.
An algorithm {\em solves approximate consensus\/} in $\N$ if for all
  $\varepsilon >0$ and all~$\Delta$, 
  each execution with a communication pattern in $\N$ with initial diameter at most~$\Delta$ satisfies:
\begin{itemize}
\item{\em Termination.\/} Each agent eventually decides.
	
\item{\em $\varepsilon$-Agreement.\/} If agents $i$ and $j$ decide $d_i$ and
    $d_j$, respectively, then $\lVert d_i - d_j' \lVert \,\leq \varepsilon$.
  
\item{\em Validity.\/} If agent $i$ decides $d_i$, then $d_i$ is in the convex hull of 
  initial values $y_1(0), \dots, y_n(0)$.
	
\end{itemize}		

\subsection{Performance Metrics}\label{sec:metrics}

The {\em valency\/} of a configuration~$C$,
denoted by $Y^*(C)$,
is defined as the set of limits
of the values~$y_i$ in executions that include 
configuration~$C$~\cite{FNS18:podc}.
If the execution is clear from the context, we abbreviate
$Y^*(t) = Y^*(C_t)$.
The {\em contraction rate\/} of an execution is defined as
\begin{equation*}
\limsup_{t\to\infty}
\sqrt[t]{\diam\big(Y^*(t)\big)}
\enspace.
\end{equation*}
The contraction rate of an algorithm in a network model is the supremum
of the contraction rates of its executions.
For convex combination algorithms,
the contraction rate is always upper-bounded by its
{\em convergence rate}, that is,
\begin{equation*}
\limsup_{t\to\infty}
\sqrt[t]{\diam\big(Y^*(t)\big)}
\leq
\limsup_{t\to\infty}
\sqrt[t]{\Delta\big(y(t)\big)}
\enspace.
\end{equation*}
We define
the {\em round-by-round convergence rates\/} by
\[
    c(t) = \frac{\Delta\big(y(t)\big)}{\Delta\big(y(t-1)\big)}
\]
for a given execution and a given round $t\geq 1$.
Clearly, an algorithm that guarantees a round-by-round convergence rate
of $c(t)\leq \beta$ also guarantees a convergence rate of at most~$\beta$.
Since both of our algorithms are convex combination algorithms, all our 
upper bounds on the round-by-round convergence rates are also upper bounds for 
the contraction rates.

The {\em convergence time\/} of a given execution measures the time from which
on all values are guaranteed to be in an~$\varepsilon$ of each other.
Formally, it is the function defined as
\begin{equation*}
T(\varepsilon)
=
\min
\big\{
t\geq 0
\mid
\forall \tau\geq t\colon
\Delta\big(x(\tau)\big) \leq \varepsilon
\big\}
\enspace.
\end{equation*}
In an execution that satisfies $c(t) \leq \beta$ for all $t\geq 1$, we
have the bound
$T(\varepsilon) \leq \left\lceil \log_{1/\beta} \frac{\Delta}{\varepsilon}\right\rceil$
on the convergence time,
where $\Delta = \Delta\big(y(0)\big)$ is the diameter of the set of initial
values.
 
\section{Algorithms}
\label{sec:algorithms}

In this section, we introduce two new algorithms for solving asymptotic
and approximate consensus in arbitrary inner product spaces with constant 
contraction rates.
We present our algorithms and prove their correctness and bounds on
their performance in non-split networks models.
While we believe that this framework is the one in which our arguments
are clearest,
our results can be extended to a number of other models whose
underlying communication graphs turn out to be, in fact, non-split.
The following is a selection of these models:

\begin{itemize}
\item Rooted network models: This is the largest class of network models
in which asymptotic and approximate consensus are solvable~\cite{CBFN15}.
A network model is rooted if all its communication graphs include a
directed rooted spanning tree, though not necessarily the same in all graphs.
Although not every such communication graph is non-split, 
Charron-Bost et al.~\cite{CBFN15}
showed that the cumulative communication graph over $n-1$ rounds in
a rooted network model is always non-split.
In such network models, one can use amortized versions~\cite{CBFN16}
of the algorithms, which operate in macro-rounds of $n-1$ rounds each.
If an algorithm has a contraction rate~$\beta$ in non-split network models,
then its amortized version has contraction rate $\sqrt[n-1]{\beta}$
in rooted network models.
The amortized versions of our algorithms thus have contraction rates independent
of the dimension of the data.
\item Omission faults: In the omission fault model studied by Santoro and
Widmayer~\cite{SW89}, the adversary can delete up to~$t$ messages from
a fully connected communication graph each round.
If $t\leq n-1$, then all communication graphs are non-split.
If $t \leq 2n-3$, then all communication graphs are rooted~\cite{CBFN15}.
Our algorithms are hence applicable in both these cases and have contraction
rates independent of the dimension.
\item Asynchronous message passing with crash faults:
Building asynchronous rounds atop of asynchronous message passing by waiting
for $n-f$ messages in each round, the resulting
communication graphs are non-split as long as the number~$f$ of possible 
crashes is strictly smaller than $n/2$.
We hence get a constant contraction rate using our algorithms also in this 
model.
For $f \geq n/2$, a partition argument shows that neither asymptotic nor 
approximate consensus are solvable.
\item Asynchronous message passing with Byzantine faults:
Mendes et al.~\cite{MHVG15} showed that approximate consensus is solvable
in asynchronous message passing systems with~$f$ Byzantine faults if and
only if $n > (d+2)f$ where~$d$ is the dimension of the data.
The algorithms they presented construct a round structure whose 
communication graphs turn out to be non-split.
Since the construction is not straightforward, we postpone the discussion
of our algorithms in this model to Section~\ref{sec:asyncbyz}.
\end{itemize}

\subsection{Non-split Network Models}
  
We now present our two new algorithms, MidExtremes and ApproachExtreme.
Both operate in the following simple round structure:
broadcast the current value~$y_i$ and then update it to a new value depending
on the set $\Rcv_i$ of values~$y_j$ received from other agents in the current
round.
Both of them only need to calculate distances between values and 
form the midpoint between two values.
In particular, we do not need to make any assumption on the dimension of
the space of possible values for implementing the algorithms.
We only need a distance and an affine structure, for calculating the midpoint.
Our correctness proofs, however, rely on the fact that the distance function is
a norm induced by an inner product.

Note that, although we present algorithms for asymptotic consensus,
combined with our upper bounds on the convergence time,
one can easily deduce versions for approximate consensus by having
the agents decide after the upper bound.
Our upper bounds only depend on the precision parameter~$\varepsilon$
and (an upper bound on) the initial diameter~$\Delta$.
While upper bounds on the initial diameter cannot be deduced during
execution in general non-split network models, it can be done in specific
models, like asynchronous message passing with Byzantine 
faults~\cite{MHVG15}.
Otherwise, we need to assume an a priori known bound on the initial
diameter to solve approximate consensus.

The algorithm MidExtremes, which is shown in Algorithm~\ref{alg:midextremes},
updates its value~$y_i$ to the midpoint of a pair of extremal points of 
$\Rcv_i$ that realizes its diameter.
In the worst case, it thus has to compare the distances of $\Theta(n^2)$ pairs
of values.
For the specific case of Euclidean spaces $V = \IR^d$ stored in a
component-wise representation, this amounts to $O(n^2d)$ local 
scalar operations for each agent in each round.
  
\begin{algorithm}[ht]
\begin{algorithmic}[1]
\INITIALLY{}
\STATE $y_i$ is the initial value in $V$ 
\ROUND{}
\STATE broadcast $y_i$
\STATE $\Rcv_i \gets$ set of received values
\STATE 
$\displaystyle (a,b) \gets \argmax_{ (a,b) \in Rcv_i^2} \lVert a - b\rVert$
\STATE 
$\displaystyle y_i \gets \frac{a+b}{2}$
\end{algorithmic}
\caption{Asymptotic consensus algorithm MidExtremes for agent $i$}
\label{alg:midextremes}
\end{algorithm}

It turns out that we can show a round-by-round convergence rate of the
MidExtremes
algorithm  independent of the dimension or the number of agents,
namely $\sqrt{7/8}$.
For the specific case of values from the real line $V = \IR$,
it reduces to the MidPoint algorithm~\cite{CBFN16}, whose contraction rate
of~$1/2$
is known to be optimal~\cite{FNS18:podc}.

\begin{thm}\label{thm:midextremes}
In any non-split network model with values from any inner product space,
the MidExtremes algorithm guarantees
a round-by-round convergence rate of
$c(t) \leq \sqrt{{7}/{8}}$ for all rounds $t\ge 1$.
Its convergence time is at most
$T(\varepsilon) = \left\lceil\log_{\sqrt{8/7}}
\frac{\Delta}{\varepsilon}\right\rceil$
where~$\Delta$ is the diameter of the set of initial values.

In the particular case of values from the real line, it
guarantees a round-by-round convergence rate of $c(t) \leq 1/2$
and a convergence time of
$T(\varepsilon) = \left\lceil\log_{2}
\frac{\Delta}{\varepsilon}\right\rceil$.
\end{thm}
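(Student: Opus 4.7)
The plan is to bound $\|y_i(t) - y_j(t)\|^2$ for an arbitrary pair of agents $i, j$ in terms of $D := \Delta\bigl(y(t-1)\bigr)$, and then take the supremum over $i, j$ to get the round-by-round rate. Throughout, let $(a_i, b_i)$ denote the max-distance pair selected by agent $i$ in round $t$, set $m_i := (a_i + b_i)/2 = y_i(t)$ and $d_i := \|a_i - b_i\|$, and define $(a_j, b_j), m_j, d_j$ analogously. Non-splitness of the round-$t$ communication graph supplies a vector $v \in \Rcv_i \cap \Rcv_j$. Every one of $a_i, b_i, a_j, b_j, v$ is a value from round $t-1$, so any two of them are within distance $D$; moreover, the maximality of $(a_i, b_i)$ inside $\Rcv_i$ forces $\|v - a_i\|, \|v - b_i\| \le d_i$, and symmetrically for $j$.

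I would then extract two complementary inequalities from the inner-product structure. The first is a polarization identity obtained by expanding $\|(a_i + b_i) - (a_j + b_j)\|^2$ via the parallelogram law along both diagonal pairings, namely
\[
4 \|m_i - m_j\|^2 \;=\; \|a_i - a_j\|^2 + \|a_i - b_j\|^2 + \|b_i - a_j\|^2 + \|b_i - b_j\|^2 - d_i^2 - d_j^2 ,
\]
which, after bounding each cross-term by $D^2$, yields $\|m_i - m_j\|^2 \le D^2 - (d_i^2 + d_j^2)/4$. The second applies the parallelogram law to $(a_i - v, b_i - v)$ to give $\|m_i - v\|^2 = \tfrac12\bigl(\|a_i - v\|^2 + \|b_i - v\|^2\bigr) - \tfrac14 d_i^2 \le \tfrac34 d_i^2$, and symmetrically for $j$; combined with the triangle inequality and $(d_i + d_j)^2 \le 2(d_i^2 + d_j^2)$, this yields $\|m_i - m_j\|^2 \le \tfrac32 (d_i^2 + d_j^2)$.

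The crux is then a case split at the threshold $D^2/2$, where the two bounds exactly coincide. If $d_i^2 + d_j^2 \ge D^2/2$, the polarization bound gives $\|m_i - m_j\|^2 \le D^2 - D^2/8 = 7D^2/8$; otherwise the common-neighbor bound gives $\|m_i - m_j\|^2 < \tfrac32 \cdot D^2/2 = 3D^2/4 < 7D^2/8$. Taking the supremum over $i, j$ yields $c(t) \le \sqrt{7/8}$, and the convergence-time bound follows by the iteration formula $T(\varepsilon) \le \lceil \log_{1/\beta} (\Delta/\varepsilon) \rceil$ already recorded in Section~\ref{sec:metrics}.

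For the specialization to $V = \IR$, the max-distance pair in any finite subset of $\IR$ is $(\min, \max)$, so MidExtremes coincides there with the classical MidPoint algorithm, whose contraction rate of $1/2$ in non-split models is known~\cite{CBFN16}. The only conceptually nontrivial step is recognising that the polarization bound (tight when the $d_i, d_j$ are large) and the common-neighbor bound (tight when they are small) fit together across the threshold $d_i^2 + d_j^2 = D^2/2$; once that dichotomy is in place, both inequalities are routine parallelogram-law computations.
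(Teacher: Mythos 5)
Your proof is correct, and it takes a genuinely different route from the paper's. The paper establishes a "median-length" lemma for triangles (Apollonius' formula), applies it three times to the tetrahedron on $a_i,b_i,a_j,b_j$ to obtain $\lVert m_i - m_j\rVert^2 \leq \gamma^2 - \tfrac14(d_i^2 + d_j^2)$ where $\gamma := \diam(\{a_i,b_i,a_j,b_j\})$, and then exploits non-splitness via the constraint $\gamma \leq d_i + d_j$, which forces $d_i^2 + d_j^2 \geq \gamma^2/2$ and yields $\tfrac78\gamma^2 \leq \tfrac78 D^2$ after a one-variable optimization. Your polarization identity, obtained by adding the parallelogram law over both diagonal pairings, is a slicker way to reach essentially the same first inequality (with $D$ in place of $\gamma$), but you then deviate: rather than deriving a lower bound on $d_i^2 + d_j^2$ from the common neighbor $v$, you use $v$ to produce a second, \emph{complementary} upper bound $\lVert m_i - m_j\rVert^2 \leq \tfrac32(d_i^2+d_j^2)$ via $\lVert m_i - v\rVert^2 \leq \tfrac34 d_i^2$ and the triangle inequality, and then close the argument with a case split. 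This is a nice structural variant: the first bound degrades as $d_i^2+d_j^2$ shrinks, and the second bound picks up exactly there. One small inaccuracy in the write-up: the two bounds do \emph{not} "exactly coincide" at $s := d_i^2 + d_j^2 = D^2/2$; they intersect at $s = 4D^2/7$, where both equal $\tfrac67 D^2$. Your chosen threshold $D^2/2$ still makes both cases evaluate below $\tfrac78 D^2$, so the proof is valid, but the motivation sentence is off; had you split at $4D^2/7$ you would actually prove the stronger bound $c(t) \leq \sqrt{6/7}$, slightly beating the paper's $\sqrt{7/8}$. The reduction of the one-dimensional case to MidPoint and the use of the convergence-time formula from Section~\ref{sec:metrics} are exactly as in the paper.
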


The second algorithm we present is called ApproachExtreme and shown in
Algorithm~\ref{alg:approachextreme}.
It updates its value~$y_i$ to the midpoint of the current value of~$y_i$ 
and the value in $\Rcv_i$ that is the farthest from it.
While having the benefit of only having to compare $O(n)$ distances,
and hence doing $O(nd)$ local scalar operations
for each agent in each round
in the case of $V=\IR^d$ with component-wise representation,
the ApproachExtreme algorithm also only has to measure distances from
its current value to other agents' values; never the distance of two other
agents' values.
This can be helpful for agents embedded into the vector space~$V$ that can
measure the distance from itself
to another agent, but not necessarily the distance between two other agents.

\begin{algorithm}[ht]
\begin{algorithmic}[1]
\INITIALLY{}
\STATE $y_i$ is the initial value in $V$ 
\ROUND{}
\STATE broadcast $y_i$
\STATE $\Rcv_i \gets$ set of received values
\STATE 
$\displaystyle b \gets \argmax_{b \in Rcv_i} \lVert y_i - b\rVert$
\STATE $\displaystyle y_i \gets \frac{y_i+b}{2}$
\end{algorithmic}
\caption{Asymptotic consensus algorithm ApproachExtreme for agent $i$}
\label{alg:approachextreme}
\end{algorithm}

The ApproachExtreme algorithm admits an upper bound of $\sqrt{31/32}$
on its round-by-round convergence rate, which is worse than the $\sqrt{7/8}$
of the MidExtremes algorithm.
For the case of the real line $V = \IR$, we can show a round-by-round
convergence rate of~$3/4$, however.

\begin{thm}\label{thm:approachextreme}
In any non-split network model with values from any inner product space,
the ApproachExtreme algorithm guarantees
a round-by-round convergence rate of
$c(t) \leq \sqrt{\frac{31}{32}}$ for all rounds $t\ge 1$.
Its convergence time is at most
$T(\varepsilon) = \left\lceil\log_{\sqrt{32/31}}
\frac{\Delta}{\varepsilon}\right\rceil$
where~$\Delta$ is the diameter of the set of initial values.

In the particular case of values from the real line, it
guarantees a round-by-round convergence rate of $c(t) \leq 3/4$
and a convergence time of
$T(\varepsilon) = \left\lceil\log_{4/3}
\frac{\Delta}{\varepsilon}\right\rceil$.
\end{thm}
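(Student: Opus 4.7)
The plan is as follows. ApproachExtreme is a convex combination algorithm (each new $y_i$ is a midpoint of two received values), so by Section~\ref{sec:metrics} it suffices to prove the round-by-round bound $c(t)\le \sqrt{31/32}$ in general and $c(t)\le \tfrac{3}{4}$ on the real line; the convergence-time bound then follows by iteration. Fixing a round $t$ and two arbitrary agents $i,j$, I therefore aim to show $\|y_i(t)-y_j(t)\|^{2}\le\tfrac{31}{32}\Delta^{2}$ where $\Delta=\Delta(y(t-1))$.

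First I would invoke the non-split property of the round-$t$ communication graph to obtain a common in-neighbour $k$ of $i$ and $j$, so that $w:=y_k(t-1)$ lies in $\Rcv_i\cap\Rcv_j$. Writing $u=y_i(t-1)$, $v=y_j(t-1)$, $p=b_i$, $q=b_j$, the five points $\{u,v,w,p,q\}$ are at pairwise distance at most $\Delta$, while the argmax selection rule of the algorithm supplies the two crucial inequalities $\|p-u\|\ge\|w-u\|$ and $\|q-v\|\ge\|w-v\|$. Using $y_i(t)-y_j(t)=\tfrac12\bigl((u-v)+(p-q)\bigr)$, the task reduces to showing $\|(u-v)+(p-q)\|^{2}\le\tfrac{31}{8}\Delta^{2}$, a saving of $\Delta^{2}/8$ over the trivial bound $4\Delta^{2}$ that follows from the pairwise diameter constraints alone.

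To extract this saving I plan to translate so that $\tfrac12(u+v)=0$, set $e=\tfrac12(v-u)$, and decompose $p,q,w$ into components along and orthogonal to $e$. The constraints $\|p-u\|\le\Delta$ and $\|p-v\|\le\Delta$ then force $\|p\|^{2}\le\Delta^{2}-\|e\|^{2}$, and analogous bounds hold for $q$ and $w$. Polarising the farthest-point inequalities against $e$ yields bounds on $\langle p,e\rangle$ and $-\langle q,e\rangle$ in terms of $\langle w,e\rangle$, which counteract the cross term in the expansion
\[
\|(u-v)+(p-q)\|^{2}=4\|e\|^{2}-4\langle e,p-q\rangle+\|p-q\|^{2}.
\]
Intuitively, the farthest-point rule pushes $p$ to the $v$-side of $w$ and $q$ to the $u$-side, so $p-q$ has substantial positive inner product with $e$, causing the desired cancellation. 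The main obstacle will be turning this qualitative picture into the sharp constant $\tfrac{31}{32}$: unlike the factor $1/2$ coming from a plain midpoint update, this constant is not determined by a single clean inner-product identity, and I expect to need a careful case split on whether $\|u-v\|$ or $\|p-q\|$ is close to $\Delta$, together with Cauchy--Schwarz in each regime and the identification of an asymmetric extremal configuration that certifies tightness of the argument.

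On the real line the geometry is much tighter: $p$ must be an endpoint of $\Rcv_i$ and $q$ an endpoint of $\Rcv_j$, and the farthest-point inequalities become one-sided constraints on the position of $w$. A short case analysis on whether $p$ and $q$ lie on the same or opposite sides of the interval $[u,v]$ yields $|y_i(t)-y_j(t)|\le\tfrac34\Delta$. The configuration $u=\Delta/4$, $v=3\Delta/4$, $w=\Delta/2$, $p=0$, $q=\Delta$ (with non-splitness realised by making $k$ a universal common in-neighbour while the agents holding $0$ and $\Delta$ fail to reach $j$ and $i$ respectively) attains the bound, showing tightness of the claimed $c(t)\le\tfrac34$.
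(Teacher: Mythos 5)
You have the right structural ingredients --- ApproachExtreme is a convex-combination algorithm so round-by-round bounds suffice, the non-split property supplies a common received vector $w=c$, and the argmax rule gives $\|u-p\|\ge\|u-w\|$ and $\|v-q\|\ge\|v-w\|$. Your one-dimensional argument and the tightness example at $3/4$ look correct (the paper instead shows ApproachExtreme is $\tfrac14$-safe and invokes the $\varrho$-safety theorem of Charron-Bost et al.; your direct case split on the sides of $[u,v]$ is an equivalent route, and the tightness observation is not in the paper). However, the multidimensional case has a genuine gap: you never actually derive the constant $31/32$. You defer it to ``a careful case split'' and ``the identification of an asymmetric extremal configuration'', neither of which you carry out, and your assessment that the constant ``is not determined by a single clean inner-product identity'' is incorrect --- it is.

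The paper's route closes exactly this gap cleanly. Writing $a=u$, $b=p$, $a'=v$, $b'=q$, $m=y_i(t)$, $m'=y_j(t)$, it first shows from the argmax inequalities and the triangle inequality through $c$ that $\gamma := \diam\big(\{a,b,a',b'\}\big) \le 2\|a-b\|+2\|a'-b'\|$. The non-obvious piece of this is $\|b-b'\|\le\|b-a\|+\|a-c\|+\|c-a'\|+\|a'-b'\|\le 2\|a-b\|+2\|a'-b'\|$; your sketch never controls $\|p-q\|$ beyond the trivial diameter bound, and this chain through $c$ is what is missing from your plan. Then the midpoint identity $\|m-c\|^2=\tfrac12\|a-c\|^2+\tfrac12\|b-c\|^2-\tfrac14\|a-b\|^2$ (Lemma~\ref{lem:triangle}, which is precisely what underlies your observation $\|p\|^2\le\Delta^2-\|e\|^2$) is applied to the three triangles $\{a,b,a'\}$, $\{a,b,b'\}$, $\{a',b',m\}$ to obtain
\[
\|m-m'\|^2\le\gamma^2-\tfrac14\bigl(\|a-b\|^2+\|a'-b'\|^2\bigr)\,,
\]
and plugging in the constraint and maximizing over the single variable $\xi=\|a-b\|$ gives $\tfrac{31}{32}\gamma^2$ directly. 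The factor $2$ in the constraint, compared with the factor $1$ in the analogous MidExtremes lemma where both endpoints are argmaxes, is exactly what degrades $\sqrt{7/8}$ to $\sqrt{31/32}$. Your coordinate decomposition along $e=\tfrac12(v-u)$ could in principle be pushed through, but it rederives the same identity in a less symmetric way and would require you to handle the orthogonal components of $p$, $q$, $w$ and the $\|p-q\|$ term with a multi-variable optimization that the paper's triangle-by-triangle application avoids.
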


\subsection{Asynchronous Byzantine Message Passing}
\label{sec:asyncbyz}

We now show how to adapt algorithm MidExtremes to the case of
asynchronous message passing systems with at most~$f$ Byzantine agents.
The algorithm proceeds in the same asynchronous round structure
and safe area calculation used by Mendes et al.~\cite{MHVG15}
whenever approximate consensus is solvable, i.e., when $n > (d+2)f$.
Plugging in the MidExtremes algorithm, we achieve a round-by-round convergence
rate and round complexity independent of the dimension~$d$.

More specifically, our algorithm has
a round complexity of $O\!\left(\log \frac{\Delta}{\varepsilon}\right)$,
which leads to
a message complexity of $O\!\left(n^2 \log \frac{\Delta}{\varepsilon}\right)$
where~$\Delta$ is the maximum Euclidean distance of initial vectors of {correct}
agents.
In contrast, the Mendes-Herlihy algorithm has a worst-case round complexity of
$\Omega\!\left( d\log \frac{d\Delta}{\varepsilon}\right)$
and a worst-case message complexity of
$\Omega\!\left( n^2d\log \frac{d\Delta}{\varepsilon}\right)$.
We are thus able to get rid of all terms depending on the dimension~$d$.

After an initial round estimating the initial diameter of the system,
the Mendes-Herlihy algorithm has each agent~$i$ repeat the following steps in
each coordinate
$k\in \{1,2,\dots,d\}$ for 
$\Theta\!\left( \log \frac{d\Delta}{\varepsilon} \right)$ rounds:

\begin{enumerate}
\item Collect a multiset~$V_i$ of agents' vectors such that every
intersection $V_i \cap V_j$ has at least $n-f$ elements via reliable broadcast
and the witness technique~\cite{AAD05}.
\item Calculate the safe area~$S_i$ as the intersection of the convex hulls
of all sub-multisets of~$V_i$ of size $\lvert V_i \rvert - f$.
The safe area is guaranteed to be a subset of the convex hull of vectors
of correct agents.
Helly's theorem~\cite{DGK63} can be used to show that every intersection
$S_i \cap S_j$ of safe areas is nonempty.

\item Update the vector~$y_i$ to be in the safe area~$S_i$ and have its
$k$\textsuperscript{th}
coordinate equal to the midpoint of the set of $k$\textsuperscript{th}
coordinates in~$S_i$.
\end{enumerate}

The fact that safe areas have nonempty pairwise intersections guarantees
that the diameter in the $k$\textsuperscript{th} coordinate
\begin{equation*}
\delta_k(t) = \max_{i,j\text{ correct}} 
\left\lvert y_i^{(k)}(t) - y_j^{(k)}(t) \right\rvert
\end{equation*}
at the end of round~$t$ fulfills 
$\delta_k(t) \leq \delta_k(t-1)/2$
if round~$t$ considers coordinate~$k$.
The choice of the number of rounds for each coordinate guarantees that
we have $\delta_k(t) \leq \varepsilon / \sqrt{d}$ after the last round for
coordinate~$k$.
This in turn makes sure that the Euclidean diameter of the set of vectors of
correct agents after all of the
$\Theta\!\left( d \log\frac{d\Delta}{\varepsilon} \right)$
rounds is at most~$\varepsilon$.

The article of Mendes et al.~\cite{MHVG15} describes a second
algorithm, the Vaidya-Garg algorithm, which
replaces steps~2 and~3 by updating~$y_i$ to the non-weighted average of
arbitrarily chosen points in the safe areas of all sub-multisets of~$V_i$ of
size $n-f$.
Another difference to the Mendes-Herlihy algorithm is that it repeats the steps
not several times for every dimension, but for
$\Theta\!\left( n^{f+1} \log \frac{d\Delta}{\varepsilon}\right)$
rounds in total.
The Vaidya-Garg algorithm comes with the advantage of not having to do the
calculations to find a midpoint for the $k$\textsuperscript{th} coordinate
while remaining inside the safe area, but also comes with the cost of a 
convergence rate and a round complexity that depends on the number of
agents.

The algorithm we propose has the same structure as the Mendes-Herlihy
algorithm, but with the difference that 
we replace step~3 by updating vector~$y_i$ to the midpoint of two points
that realize the Euclidean diameter of the safe area~$S_i$.
According to our results in Section~\ref{sec:perf:midextremes}, 
the Euclidean diameter
\begin{equation*}
\delta(t) = \max_{i,j\text{ correct}} \big\lVert y_i(t) - y_j(t) \big\rVert
\end{equation*}
of the set of vectors of correct agents at the end of round~$t$
satisfies
\begin{equation*}
\delta(t) \leq \sqrt{\frac{7}{8}} \delta(t-1)
\enspace.
\end{equation*}
This means that we have $\delta(T) \leq \varepsilon$ after
\begin{equation*}
T(\varepsilon) =
\left\lceil
\log_{\sqrt{8/7}} \frac{\Delta}{\varepsilon}
\right\rceil
\end{equation*}
rounds.

\section{Performance Bounds}\label{sec:perf}

We next show upper bounds on the round-by-round convergence rate for algorithms
  MidExtremes (Theorem~\ref{thm:midextremes}) and ApproachExtreme (Theorem~\ref{thm:approachextreme})
  in non-split network models.

\subsection{Bounds for MidExtremes}\label{sec:perf:midextremes}

For dimension $1$, MidExtremes is equivalent to the MidPoint Algorithm.
We hence already know that $c(t) \leq \frac{1}{2}$ from~\cite{CBFN16}, proving
  the case of the real line in Theorem~\ref{thm:midextremes}.

For the case of higher dimensions we will show that $c(t) \leq \sqrt{\frac{7}{8}}$ holds.
The proof idea is as follows:
For a round $t \ge 1$, we consider two agents $i,j$
  whose distance realizes $\Delta(y(t))$.
By the algorithm we know that both agents set their $y_i(t)$ and $y_j(t)$ 
  according to $y_i(t) = m = (a+b)/2$ and $y_j(y) = m' = (a'+b')/2$,
  where $a,b$ are the extreme points received by agents $i$ in round $t$ and
  $a',b'$ are the extreme points received by agents $j$ in the same round.
All four points must lie within a common subspace
  of dimension $3$, and form the vertices of a tetrahedron as depicted in Figure~\ref{fig:tetrahedron}.

\tikzset{
  schraffiert/.style n args={2}{pattern=#2,pattern color=#1},
  schraffiert/.default=black
}

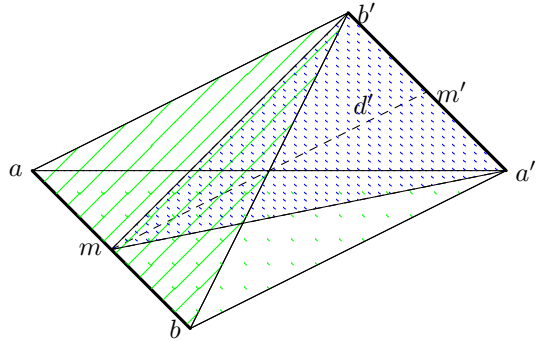
\begin{figure}[b]
\centering
\begin{tikzpicture}[line join = round, line cap = round, scale=2.1]
  \pgfdeclarepatternformonly{mypattern1}%
   {\pgfqpoint{-1pt}{-1pt}}%
   {\pgfqpoint{10pt}{10pt}}%
   {\pgfqpoint{9pt}{9pt}}%
   {
     \pgfsetlinewidth{0.4pt}
     \pgfpathmoveto{\pgfqpoint{0pt}{0pt}}
     \pgfpathlineto{\pgfqpoint{9.1pt}{9.1pt}}
     \pgfusepath{stroke}
   }
   \pgfdeclarepatternformonly{mypattern2}%
   {\pgfqpoint{-1pt}{-1pt}}%
   {\pgfqpoint{10pt}{10pt}}%
   {\pgfqpoint{9pt}{9pt}}%
   {
     \pgfsetlinewidth{0.4pt}
     \pgfpathmoveto{\pgfqpoint{0pt}{0pt}}
     \pgfpathlineto{\pgfqpoint{-9.1pt}{9.1pt}}
     \pgfusepath{stroke}
   }
   \pgfdeclarepatternformonly{mypattern3}%
   {\pgfqpoint{-1pt}{-1pt}}%
   {\pgfqpoint{10pt}{10pt}}%
   {\pgfqpoint{4pt}{3pt}}%
   {
     \pgfsetlinewidth{0.4pt}
     \pgfpathmoveto{\pgfqpoint{0pt}{0pt}}
     \pgfpathlineto{\pgfqpoint{-4.1pt}{3.1pt}}
     \pgfusepath{stroke}
   }

  \coordinate[label=left:$a$] (A) at (0,0);
  \coordinate[label=left:$b$] (B) at (1,-1);
  \coordinate[label=left:$m$] (M) at (0.5,-0.5);
  
  \coordinate[label=right:$a'$] (As) at (3,0);
  \coordinate[label=right:$b'$] (Bs) at (2,1);
  \coordinate[label=right:$m'$] (Ms) at (2.5,0.5);

  \draw[-,very thick] (A) -- (B) node[pos=0.25,left] {} node[pos=0.75,left] {};
  \draw[-] (B) -- (As) node[pos=0.5,right] {};
  \draw[dotted] (A) -- (As) node[pos=0.8,above] {};
  \draw[-,very thick] (As) -- (Bs) node[pos=0.25,right] {} node[pos=0.75,right] {};
  \draw[-] (A) -- (Bs) node[pos=0.5,left] {};
  \draw[-] (B) -- (Bs) node[pos=0.8,right] {};

  \draw[dashed] (M) -- (Bs) node[pos=0.5,above] {};
  \draw[dashed] (M) -- (As) node[pos=0.5,below] {};
  \draw[dashed] (M) -- (Ms) node[pos=0.8,above] {$d'$};

  \draw[schraffiert={green}{mypattern1}]{(A) -- (B) -- (Bs) -- (A)};
  \draw[schraffiert={green}{mypattern2}]{(A) -- (B) -- (As) -- (A)};
  \draw[schraffiert={blue}{mypattern3}]{(Bs) -- (M) -- (As) -- (Bs)};
  
\end{tikzpicture}  
\caption{Tetrahedron formed by extreme points $a$ and $b$ of agent $i$ and extreme points $a'$ and $b'$
  of agent $j$. The distance between the new agent positions $m$ and $m'$ is $d'$.}\label{fig:tetrahedron}
\end{figure}

Further, any three points among $a,b,a',b'$ must lie within a $2$ dimensional subspace, forming a triangle.
Lemma~\ref{lem:triangle} states the distance from the midpoint of two of its vertices to the opposite vertex, say $c$,
  and an upper bound in case the two edges incident to $c$ are upper bounded in length.

\begin{lem}\label{lem:triangle}
Let $\gamma \geq 0$ and $a,b,c \in V$.
Setting $m = (a+b)/2$,
we have
\[
\lVert m - c\rVert^2
=
\frac{1}{2}
\lVert a - c\rVert^2
+
\frac{1}{2}
\lVert b - c\rVert^2
-
\frac{1}{4}
\lVert a - b\rVert^2
\enspace.
\]
In particular, if
$\lVert a - c\rVert \leq \gamma$
and
$\lVert b - c\rVert \leq \gamma$,
then
\[
\lVert m - c\rVert^2
\leq
\gamma^2
-
\frac{1}{4}
\lVert a - b\rVert^2
\enspace.
\]
\end{lem}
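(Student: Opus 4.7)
The plan is to obtain the equality by direct expansion in the inner product, and then derive the inequality as an immediate substitution. This is essentially Apollonius' median identity (equivalently, a form of the parallelogram law), so I expect no real obstacle beyond bookkeeping; the point is simply to confirm it holds in any inner product space $V$, not just in $\IR^d$.

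First, I would write $m - c = \tfrac{1}{2}(a-c) + \tfrac{1}{2}(b-c)$ and expand
\[
\|m-c\|^2 = \left\langle \tfrac{1}{2}(a-c) + \tfrac{1}{2}(b-c),\ \tfrac{1}{2}(a-c) + \tfrac{1}{2}(b-c)\right\rangle
\]
using bilinearity and symmetry of $\langle\cdot,\cdot\rangle$, which yields
\[
\|m-c\|^2 = \tfrac{1}{4}\|a-c\|^2 + \tfrac{1}{2}\langle a-c, b-c\rangle + \tfrac{1}{4}\|b-c\|^2 .
\]
To eliminate the cross term, I would separately expand $\|a-b\|^2 = \|(a-c) - (b-c)\|^2$ by the same rule to obtain
\[
\|a-b\|^2 = \|a-c\|^2 - 2\langle a-c, b-c\rangle + \|b-c\|^2 ,
\]
so that $\langle a-c, b-c\rangle = \tfrac{1}{2}\bigl(\|a-c\|^2 + \|b-c\|^2 - \|a-b\|^2\bigr)$. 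Substituting this back into the preceding display and collecting coefficients produces exactly
\[
\|m-c\|^2 = \tfrac{1}{2}\|a-c\|^2 + \tfrac{1}{2}\|b-c\|^2 - \tfrac{1}{4}\|a-b\|^2 ,
\]
which is the claimed equality.

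For the second statement, I would simply plug the hypotheses $\|a-c\|\le\gamma$ and $\|b-c\|\le\gamma$ into the identity just proved: the first two terms on the right are each bounded by $\tfrac{1}{2}\gamma^2$, and the last term carries a minus sign and is unaffected, giving $\|m-c\|^2 \le \gamma^2 - \tfrac{1}{4}\|a-b\|^2$. No case analysis or geometric argument is needed, and the proof uses only that the norm comes from an inner product — exactly the standing assumption on $V$.
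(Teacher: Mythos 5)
Your proof is correct and takes essentially the same route as the paper: both are direct inner-product expansions of the median-length identity, differing only in bookkeeping. The paper decomposes $a-c$ and $b-c$ through $m$ and adds so the cross terms $\langle a-m,m-c\rangle$ and $\langle b-m,m-c\rangle$ cancel (using $a-m=-(b-m)$), whereas you decompose $m-c$ through $c$ and eliminate the cross term $\langle a-c,b-c\rangle$ via a second expansion of $\lVert a-b\rVert^2$; the two computations are algebraically equivalent and equally valid.
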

\begin{proof}
We begin by calculating
\begin{equation}\label{eq:lem:triangle:1}
\begin{split}
\lVert a - c \rVert^2
& =
\big\lVert
(a-m) + (m - c)
\big\rVert^2
=
\lVert a - m \rVert^2
+
\lVert m - c \rVert^2
+
2\langle a-m, m-c \rangle
\end{split}
\end{equation}
and
\begin{equation}\label{eq:lem:triangle:2}
\begin{split}
\lVert b - c \rVert^2
& =
\big\lVert
(b-m) + (m - c)
\big\rVert^2
=
\lVert b - m \rVert^2
+
\lVert m - c \rVert^2
+
2\langle b-m, m-c \rangle
\enspace.
\end{split}
\end{equation}
Adding~\eqref{eq:lem:triangle:1} and~\eqref{eq:lem:triangle:2},
while noting $\lVert a - m\rVert^2 = \lVert b - m\rVert^2 = 
\frac{1}{4}\lVert a - b\rVert^2$
and $a-m = (a-b)/2 = -(b - m)$, gives
\begin{equation*}
\lVert a - c\rVert^2
+
\lVert b - c\rVert^2
=
\frac{1}{2}\lVert a - b \rVert^2
+
2\lVert m - c\rVert^2
\enspace.
\end{equation*}
Rearranging the terms in the last equation concludes the proof.
\end{proof}

We are now in the position to prove Lemma~\ref{lem:tetrahedron} that is central for Theorem~\ref{thm:midextremes}.
The lemma provides
  an upper bound on the distance $d'$ between $m$ and $m'$ for the tetrahedron
  in Figure~\ref{fig:tetrahedron} given that all
  its sides are upper bounded by some $\gamma \ge 0$ and the sum of the lengths of edge $a,b$ and $a',b'$, i.e.,
  $\lVert a-b \rVert + \lVert a'-b' \rVert$, is lower bounded by $\gamma$.
At the heart if the proof of Lemma~\ref{lem:tetrahedron} is an application of Lemma~\ref{lem:triangle} for the
  three hatched triangles in Figure~\ref{fig:tetrahedron}.
  
\begin{lem}\label{lem:tetrahedron}
Let $a,b,a',b'\in V$ and $\gamma\geq0$ such that
\begin{equation}\label{lem:tetrahedron:hyp}
\diam\!\big(\{a,b,a',b'\}\big) \leq \gamma \leq 
\lVert a - b\rVert + \lVert a' - b' \rVert
\enspace.
\end{equation}
Then, setting $m = (a+b)/2$ and $m' = (a'+b')/2$, we have
\[
\lVert m - m'\rVert
\leq
\sqrt{\frac{7}{8}}
\gamma
\enspace.
\]
\end{lem}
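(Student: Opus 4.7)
The plan is to combine three applications of Lemma~\ref{lem:triangle}, exactly along the three hatched triangles of Figure~\ref{fig:tetrahedron}, and then close the argument with an elementary power-mean inequality exploiting the lower bound $\|a-b\| + \|a'-b'\| \geq \gamma$.

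First, since every pair among $\{a,b,a',b'\}$ is at distance at most $\gamma$, I apply Lemma~\ref{lem:triangle} to the triangle $(a,b,b')$ with apex $c = b'$, which gives
\[
\lVert m - b'\rVert^2 \;\leq\; \gamma^2 - \tfrac{1}{4}\lVert a - b\rVert^2,
\]
and analogously to the triangle $(a,b,a')$ with apex $c = a'$, which gives
\[
\lVert m - a'\rVert^2 \;\leq\; \gamma^2 - \tfrac{1}{4}\lVert a - b\rVert^2.
\]
These are the two green hatched triangles sharing the edge $a\,b$ that is bisected by $m$.

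Next I apply the identity part of Lemma~\ref{lem:triangle} to the triangle $(a', b', m)$, with $m' = (a' + b')/2$ now playing the role of the midpoint and $m$ playing the role of the opposite vertex. This is exactly the blue hatched triangle and yields
\[
\lVert m - m'\rVert^2 \;=\; \tfrac{1}{2}\lVert m - a'\rVert^2 + \tfrac{1}{2}\lVert m - b'\rVert^2 - \tfrac{1}{4}\lVert a' - b'\rVert^2.
\]
Substituting the two bounds above gives
\[
\lVert m - m'\rVert^2 \;\leq\; \gamma^2 \;-\; \tfrac{1}{4}\lVert a - b\rVert^2 \;-\; \tfrac{1}{4}\lVert a' - b'\rVert^2.
\]

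It remains to use the hypothesis $\lVert a-b\rVert + \lVert a'-b'\rVert \geq \gamma$. By the standard inequality $x^2 + y^2 \geq (x+y)^2/2$ applied to $x = \lVert a-b\rVert$ and $y = \lVert a'-b'\rVert$, we obtain $\lVert a-b\rVert^2 + \lVert a'-b'\rVert^2 \geq \gamma^2/2$, and hence
\[
\lVert m - m'\rVert^2 \;\leq\; \gamma^2 - \tfrac{1}{8}\gamma^2 \;=\; \tfrac{7}{8}\gamma^2,
\]
which is the desired bound. I do not anticipate a serious obstacle: the only place where care is needed is checking that the three triangles are set up so that the sides used as hypotheses in Lemma~\ref{lem:triangle} are all among the six pairwise distances bounded by $\gamma$; once that is noted, the algebra is immediate, and the constant $7/8$ is seen to come precisely from the balance between the quadratic deficit $\tfrac{1}{4}(\lVert a-b\rVert^2 + \lVert a'-b'\rVert^2)$ and its minimum $\gamma^2/8$ under the linear constraint.
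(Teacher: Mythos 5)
Your proof is correct and follows essentially the same route as the paper: the same three applications of Lemma~\ref{lem:triangle} to the triangles $(a,b,a')$, $(a,b,b')$, and $(a',b',m)$, yielding $\lVert m-m'\rVert^2 \le \gamma^2 - \tfrac14(\lVert a-b\rVert^2 + \lVert a'-b'\rVert^2)$. The only (cosmetic) difference is in the final step: you invoke the power-mean inequality $x^2+y^2 \ge (x+y)^2/2$ directly, while the paper substitutes $\lVert a'-b'\rVert \ge \gamma - \lVert a-b\rVert$ and then maximizes the resulting one-variable function over $[0,\gamma]$; both give the same constant $7/8$.
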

\begin{proof}
Applying Lemma~\ref{lem:triangle}
with the points $a,b,a'$ yields 
\begin{equation}\label{eq:t1}
\lVert m - a'\rVert^2 
\leq \gamma^2 - \frac{1}{4}\lVert a - b\rVert^2
\enspace.
\end{equation}
Another invocation with the points $a,b,b'$ gives
\begin{equation}\label{eq:t2}
\lVert m - b'\rVert^2 
\leq \gamma^2 - \frac{1}{4}\lVert a - b\rVert^2
\enspace.
\end{equation}
Now, again using Lemma~\ref{lem:triangle} with the points $a',b',m$ and
the bounds of~\eqref{eq:t1} and~\eqref{eq:t2}, we get
\begin{equation*}
\lVert m - m'\rVert^2 
\leq 
\gamma^2 
-
\frac{1}{4}
\left(
\lVert a - b\rVert^2
+
\lVert a'- b'\rVert^2
\right)
\enspace.
\end{equation*}
Using the second inequality in~\eqref{lem:tetrahedron:hyp}
then shows
\begin{equation}\label{eq:lem:tetrahedron:for:f}
\lVert m - m'\rVert^2
\leq
\gamma^2 
-
\frac{1}{4}
\left(
\lVert a - b\rVert^2
+
\big(\gamma - \lVert a - b\rVert\big)^2
\right)
\enspace.
\end{equation}
Setting $\xi = \lVert a - b\rVert$, we get
\begin{equation*}
\lVert m - m'\rVert^2
\leq
\max_{0\leq \xi\leq \gamma}
\gamma^2
-
\frac{1}{4}
\big(\xi^2 + (\gamma - \xi)^2\big)
\enspace.
\end{equation*}
Deriving the function $f(\xi) = \gamma^2
-
\frac{1}{4}
\big(\xi^2 + (\gamma - \xi)^2\big)$
reveals that its maximum is attained for
$-(2\xi-\gamma)=0$, i.e., $\xi = \gamma/2$, which gives
\begin{equation*}
\lVert m - m'\rVert^2
\leq
\gamma^2
-
\frac{\gamma^2}{8}
=
\frac{7}{8}\gamma^2
\enspace.
\end{equation*}
Taking the square root now concludes the proof.
\end{proof}

We can now prove Theorem~\ref{thm:midextremes}.
For the proof we consider the tetrahedron with vertices $a,b,a',b'$ as discussed before; see Figure~\ref{fig:tetrahedron}
Recalling that the vertices $a,b$ are vectors received by an agent $i$ and $a',b'$ vectors
  received by an agent $j$ in the same round, we may infer from the non-split property
  that all communication graphs must fulfill that both $i$ and $j$ must have received
  a common vector from an agent.
Together with the algorithm's rule of picking $a,b$ and $a',b'$ as extreme points, we obtain
  the constraints required by Lemma~\ref{lem:tetrahedron}.
Invoking this lemma we finally obtain an upper bound on the distance $d'$
  between $m$ in $m'$, and by this an upper bound on the round-by-round convergence rate 
  of the MidExtremes algorithm.
  
\begin{proof}[Proof of Theorem~\ref{thm:midextremes}]
Let $i$ and $j$ be two agents.
Let $a,b\in \Rcv_i(t)$ such that $y_i(t) = (a+b)/2$
and $a',b'\in \Rcv_j(t)$ such that $y_j(t) = (a'+b')/2$.
Define $\gamma_{ij} = \diam\big(\{a,b,a',b'\}\big)$.
Since $a,b,a',b'$ are the vectors of some agents in round $t-1$, we have
$\gamma_{ij} \leq \Delta\big(y(t-1)\big)$.

Further, from the non-split property, there is an agent~$k$ whose vector 
$c = y_k(t-1)$ has been received by both~$i$ and~$j$,
i.e., $c\in \Rcv_i(t) \cap \Rcv_j(t)$.
By the choice of the extreme points $a,b$ by agent $i$, we must have
$\lVert a - c \rVert \leq \lVert a - b \rVert$; otherwise $a,b$ would not
realize the diameter of $\Rcv_i(t)$.
Analogously, by the choice of the extreme points $a',b'$ by agent $j$, 
it must hold that $\lVert a' - c \rVert \leq \lVert a' - b'\rVert$.

From the triangular inequality, we then obtain 
\[
\lVert a - a' \rVert
\leq
\lVert a - c \rVert
+
\lVert c - a' \rVert
\leq
\lVert a - b \rVert
+
\lVert a'- b' \rVert
\enspace.
\]
Analogous arguments for the other pairs of points in $\{a,b,a',b'\}$ yield
\[
\diam\big(\{a,b,a',b'\}\big) = \gamma_{ij} \leq 
\lVert a - b \rVert
+
\lVert a'- b' \rVert
\enspace.
\]
We can hence apply Lemma~\ref{lem:tetrahedron} to obtain
\[
\lVert y_i(t) - y_j(t) \rVert 
\leq
\sqrt{\frac{7}{8}} \gamma_{ij}
\leq 
\sqrt{\frac{7}{8}} \Delta\big(y(t-1)\big)
\enspace.
\]
Taking the maximum over all pairs of agents~$i$ and~$j$ now shows
$\Delta\big(y(t)\big) \leq \sqrt{7/8} \Delta\big(y(t-1)\big)$,
which concludes the proof.
\end{proof}

\subsection{Bounds for ApproachExtreme}

We start by showing the one-dimensional case of Theorem~\ref{thm:approachextreme}, i.e., $V = \IR$,
  in Section~\ref{sec:onedimcase}.
Section~\ref{sec:multidimcase} then covers the multidimensional case.

\subsubsection{One-dimensional Case}
\label{sec:onedimcase}

For the proof we use the notion of $\varrho$-safety as introduced by 
Charron-Bost et al.~\cite{CBFN16}.
A convex combination algorithm is
{\em $\varrho$-safe\/}
if
\begin{equation}\label{eq:safety}
\varrho M_i(t) + (1-\varrho) m_i(t)
\leq
y_i(t)
\leq
(1-\varrho) M_i(t) + \varrho m_i(t)
\end{equation}
where 
$M_i(t) = \max \! \big(\!\Rcv_i(t)\big)$
and
$m_i(t) = \min \! \big(\!\Rcv_i(t)\big)$.

It was shown \cite[Theorem~4]{CBFN16} that any $\varrho$-safe convex combination
algorithm guarantees a round-by-round convergence rate of $c(t) \leq 1 - \varrho$ in any non-split
network model.
In the sequel, we will show that ApproachExtreme is $\frac{1}{4}$-safe
when applied in $V = \IR$.

\begin{proof}[Proof of Theorem~\ref{thm:approachextreme}, one-dimensional case]
Let~$i$ be an agent and $t\geq 1$ a round in some execution of ApproachExtreme
in $V = \IR$.
We distinguish the two cases
$y_i(t) \leq y_i(t-1)$
and
$y_i(t) > y_i(t-1)$.

In the first case, we have $b \leq y_i(t-1)$ for the vector~$b$ that agent~$i$ 
calculates in code line~4 in round~$t$.
But then necessarily $b = y_i(t)$ since this is the most distant point 
to $y_i(t-1)$ in
$\Rcv_i(t)$ to the left of $y_i(t-1)$.
Also, 
$y_i(t-1) \geq \big(M_i(t) + m_i(t)\big)/2$
since otherwise $M_i(t)$ would be farther from $y_i(t-1)$ than $m_i(t)$.
But this means that
\[
y_i(t)
=
\frac{y_i(t-1) + m_i(t)}{2}
\geq
\frac{1}{4} M_i(t) + \frac{1}{4} m_i(t) + \frac{1}{2} m_i(t)
=
\frac{1}{4}M_i(t) + \frac{3}{4}m_i(t)
\enspace,
\]
which shows the first inequality of $\varrho$-safety~\eqref{eq:safety}
with $\varrho = \frac{1}{4}$.
The second inequality of~\eqref{eq:safety} follows from $y_i(t-1) \leq M_i(t)$
since
\[
y_i(t)
=
\frac{y_i(t-1) + m_i(t)}{2}
\leq
\frac{1}{2} M_i(t) + \frac{1}{2} m_i(t)
\leq
\frac{3}{4} M_i(t) + \frac{1}{4} m_i(t)
\enspace.
\]

In the second case, \eqref{eq:safety} is proved analogously to the first
case.
\end{proof}

\subsubsection{Multidimensional Case}
\label{sec:multidimcase}

For the proof of Theorem~\ref{thm:approachextreme} with higher dimensional values, we
consider two agents $i, j$ whose distance realizes $\Delta(y(t))$.
From the ApproachExtreme $y_i(t) = m = (a + y_i(t-1))/2$ and $y_j(t) = m' = (a' + y_j(t-1))/2$ where
  $a$ and $a'$ maximize the distance to $y_i(t-1)$ and $y_j(t-1)$, respectively, among the received values.

%
%
%
%
%
%

To show an upper bound on the distance $d'$ between the new agent positions~$m$ and~$m'$ in the multidimensional case,
we need the following variant of Lemma~\ref{lem:tetrahedron} in which we relax the upper bound
on~$\gamma$ by a factor of two, but thereby weaken the bound on $d'$.

Analogous to the proof of Theorem~\ref{thm:midextremes}, the proof is by applying
  Lemma~\ref{lem:tetrahedron:2} to the three hatched triangles in Figure~\ref{fig:tetrahedron}.

\begin{lem}\label{lem:tetrahedron:2}
Let $a,b,a',b'\in V$ and $\gamma\geq0$ such that
\begin{equation*}
\diam\!\big(\{a,b,a',b'\}\big) \leq \gamma \leq 
2\lVert a - b\rVert + 2\lVert a' - b' \rVert
\enspace.
\end{equation*}
Then, setting $m = (a+b)/2$ and $m' = (a'+b')/2$, we have
\[
\lVert m - m'\rVert
\leq
\sqrt{\frac{31}{32}}
\gamma
\enspace.
\]
\end{lem}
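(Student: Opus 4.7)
The plan is to mirror the proof of Lemma~\ref{lem:tetrahedron} essentially verbatim, only adjusting the final optimization step to account for the weaker lower bound on $\lVert a-b\rVert + \lVert a'-b'\rVert$.

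First, I would apply Lemma~\ref{lem:triangle} to the triangle with vertices $a,b,a'$, using that $\lVert a-a'\rVert, \lVert b-a'\rVert \leq \gamma$ by the diameter hypothesis, to get
$\lVert m - a'\rVert^2 \leq \gamma^2 - \tfrac{1}{4}\lVert a-b\rVert^2$.
The same argument with $a,b,b'$ yields
$\lVert m - b'\rVert^2 \leq \gamma^2 - \tfrac{1}{4}\lVert a-b\rVert^2$.
Then a third application of Lemma~\ref{lem:triangle} to the triangle $a',b',m$, with the common upper bound $\sqrt{\gamma^2 - \tfrac{1}{4}\lVert a-b\rVert^2}$ playing the role of $\gamma$ in the lemma, gives
\[
\lVert m - m'\rVert^2 \leq \gamma^2 - \tfrac{1}{4}\lVert a-b\rVert^2 - \tfrac{1}{4}\lVert a'-b'\rVert^2 \enspace.
\]

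The only place where the new hypothesis enters is the subsequent optimization. Setting $\xi = \lVert a-b\rVert$ and $\eta = \lVert a'-b'\rVert$, the assumption now reads $\xi + \eta \geq \gamma/2$ (as opposed to $\xi + \eta \geq \gamma$ in Lemma~\ref{lem:tetrahedron}). To turn the displayed inequality into a bound depending only on $\gamma$, I need a lower bound on $\xi^2 + \eta^2$ under this constraint. Here the QM--AM (or Cauchy--Schwarz) inequality gives
$\xi^2 + \eta^2 \geq \tfrac{1}{2}(\xi+\eta)^2 \geq \tfrac{1}{2}(\gamma/2)^2 = \gamma^2/8$.
Substituting back yields
\[
\lVert m - m'\rVert^2 \leq \gamma^2 - \tfrac{1}{4}\cdot\tfrac{\gamma^2}{8} = \tfrac{31}{32}\gamma^2 \enspace,
\]
and taking square roots concludes the proof.

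There is no real obstacle here: the geometry is identical to that of Lemma~\ref{lem:tetrahedron}, and the weakened hypothesis costs exactly a factor of $2$ inside the final quadratic, which translates into replacing $1/8$ by $1/32$ in the slack term. The only point to double-check is that the minimum of $\xi^2+\eta^2$ over $\xi+\eta \geq \gamma/2$ with $\xi,\eta \geq 0$ is indeed attained on the boundary $\xi+\eta=\gamma/2$ (decreasing the sum only decreases $\xi^2+\eta^2$, so the constraint is binding), and symmetry then forces $\xi=\eta=\gamma/4$, giving the same tight value $\gamma^2/8$.
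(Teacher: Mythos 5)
Your proof is correct and follows essentially the same route as the paper: three applications of Lemma~\ref{lem:triangle} to reach $\lVert m - m'\rVert^2 \leq \gamma^2 - \tfrac{1}{4}\big(\lVert a-b\rVert^2 + \lVert a'-b'\rVert^2\big)$, then optimization under the constraint $\lVert a-b\rVert + \lVert a'-b'\rVert \geq \gamma/2$. Your QM--AM step $\xi^2+\eta^2 \geq \tfrac{1}{2}(\xi+\eta)^2 \geq \gamma^2/8$ is a marginally cleaner way to finish than the paper's substitution $\eta \geq \gamma/2 - \xi$ (which, strictly, needs $\xi \leq \gamma/2$ before squaring), but it is the same optimization and not a different proof.
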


The proof of the lemma is essentially the same as that of 
Lemma~\ref{lem:tetrahedron}, with the following differences:
Equation~\eqref{eq:lem:tetrahedron:for:f} is replaced by
\begin{equation*}
\lVert m - m'\rVert^2
\leq
\gamma^2 
-
\frac{1}{4}
\left(
\lVert a - b\rVert^2
+
\left(\frac{\gamma}{2} - \lVert a - b\rVert\right)^2
\right)
\enspace,
\end{equation*}
which changes the function~$f$ to 
$f(\xi) = \gamma^2 - \frac{1}{4} \big( \xi^2 + (\frac{\gamma}{2} - \xi)^2
\big)$.
The maximum of this function~$f$ is achieved for $\xi = \gamma/4$, which 
means that
\begin{equation*}
\lVert m - m' \rVert^2
\leq
f(\gamma/4)
=
\gamma^2 - \frac{\gamma^2}{32}
=
\frac{31}{32} \gamma^2
\enspace.
\end{equation*}

\bigskip

We are now in the position to prove Theorem~\ref{thm:approachextreme}.

\begin{proof}[Proof of Theorem~\ref{thm:approachextreme}, multidimensional case]
Let~$i$ and~$j$ be two agents.
Let $a = y_i(t-1)$ and $a' = y_j(t-1)$.
Further, let $b\in \Rcv_i(t)$ such that $y_i(t) = (a + b)/2$
and $b'\in\Rcv_j(t)$ such that $y_j(t) = (a' + b')/2$.
Define $\gamma_{ij} = \diam\big(\{a,b,a',b'\}\big)$.
Since $a,b,a',b'$ are the vectors of some agents in round $t-1$, we have
$\gamma_{ij} \leq \Delta\big(y(t-1)\big)$.

From the non-split property, there is an agent~$k$ whose vector 
$c = y_k(t-1)$ has been received by both~$i$ and~$j$,
i.e., $c\in \Rcv_i(t) \cap \Rcv_j(t)$.
By the choice of the extreme point~$b$ by agent~$i$, we must have
$\lVert a - c \rVert \leq \lVert a - b \rVert$; otherwise~$b$ would not
maximize the distance to~$a$.
Analogously, by the choice of the extreme points~$b'$ by agent $j$, 
it must hold that $\lVert a' - c \rVert \leq \lVert a' - b'\rVert$.
Note, however, that the roles of~$a$ and~$b$ are not symmetric and that,
contrary to the proof of Theorem~\ref{thm:midextremes},
we can have $\lVert b - c \rVert > \lVert a - b \rVert$
or $\lVert b' - c \rVert > \lVert a' - b'\rVert$.

From the triangular inequality and the two established inequalities, we then
obtain 
\[
\lVert a - a' \rVert
\leq
\lVert a - c \rVert
+
\lVert a'- c \rVert
\leq
\lVert a - b \rVert
+
\lVert a'- b' \rVert
\enspace,
\]
\[
\lVert a - b' \rVert
\leq
\lVert a - c \rVert
+
\lVert c - a' \rVert
+
\lVert a' - b' \rVert
\leq
\lVert a - b \rVert
+
2\lVert a'- b' \rVert
\enspace,
\]
and
\[
\lVert b - b' \rVert
\leq
\lVert b - a \rVert
+
\lVert a - c \rVert
+
\lVert c - a' \rVert
+
\lVert a' - b' \rVert
\leq
2\lVert a - b \rVert
+
2\lVert a'- b' \rVert
\enspace.
\]
Analogously, $\lVert a' - b\rVert \leq 2\lVert a - b\rVert + \lVert a' -
b'\rVert$.
Together this implies
\[
\diam\big(\{a,b,a',b'\}\big) = \gamma_{ij} \leq 
2\lVert a - b \rVert
+
2\lVert a'- b' \rVert
\enspace.
\]
We can hence apply Lemma~\ref{lem:tetrahedron:2} to obtain
\[
\lVert y_i(t) - y_j(t) \rVert 
\leq
\sqrt{\frac{31}{32}} \gamma_{ij}
\leq 
\sqrt{\frac{31}{32}} \Delta\big(y(t-1)\big)
\enspace.
\]
Taking the maximum over all pairs of agents~$i$ and~$j$ now shows
$\Delta\big(y(t)\big) \leq \sqrt{31/32} \Delta\big(y(t-1)\big)$,
which concludes the proof.
\end{proof}

\section{Conclusion}\label{sec:conclusion}

We presented two new algorithms for asymptotic and approximate consensus
with values in arbitrary inner product spaces.
This includes not only the Euclidean spaces~$\IR^d$, but also spaces of 
infinite dimension.
Our algorithms are the first to have constant contraction rates, independent of
the dimension and the number of agents.

We have presented our algorithms in the framework of non-split network models
and have then shown how to apply them in several other distributed
computing models.
In particular, we improved the round complexity of the algorithms
by Mendes et al.~\cite{MHVG15} 
for asynchronous message passing with Byzantine faults
from 
$\Omega\big( d\log \frac{d\Delta}{\varepsilon} \big)$
to
$O\big( \! \log \frac{\Delta}{\varepsilon} \big)$,
eliminating all terms that depend on the dimension~$d$.

The exact value of the optimal convergence rate 
for asymptotic and approximate consensus
is known to be~$1/2$ in
dimensions one and two~\cite{FNS18:podc,CBFN16:centroid}, but the question
is still open for higher dimensions.
Our results are a step towards the solution of the problem as they show the
optimum in all dimensions to lie between~$1/2$
and $\sqrt{7/8} \approx 0.9354\dots$

\bibliography{agents}

\end{document}